\journal{Journal}
\newcommand\independent{\protect\mathpalette{\protect\independenT}{\perp}}
\def\independenT#1#2{\mathrel{\rlap{$#1#2$}\mkern2mu{#1#2}}}
\newcommand{\var}{\mbox{var}}
\newcommand{\cov}{\mbox{cov}}
\newcommand{\calV}{\mathcal{V}}
\newcommand{\inD}{\stackrel{D}{\longrightarrow}}
\newcommand{\simapprox}{\stackrel{\cdot}{\sim}}
\begin{document}

\title{Independent increments and group sequential tests}

\author[1]{Anastasios A. Tsiatis}

\author[1]{Marie Davidian}


\authormark{TSIATIS and DAVIDIAN}
\titlemark{Independent increments and group sequential tests}

\address[1]{\orgdiv{Department of Statistics}, \orgname{North Carolina State University}, \orgaddress{\state{North Carolina}, \country{USA}}}



\corres{Corresponding author {Marie Davidian,
  \email{davidian@ncsu.edu}}}

\presentaddress{Department of Statistics, North Carolina State
  University, Raleigh, North Carolina, USA..}


\abstract[Abstract]{Widely used methods and software for group
  sequential tests of a null hypothesis of no treatment difference
  that allow for early stopping of a clinical trial depend primarily
  on the fact that sequentially-computed test statistics have the
  independent increments property. However, there are many practical
  situations where the sequentially-computed test statistics do not
  possess this property.  Key examples are in trials where the primary
  outcome is a time to an event but where the assumption of
  proportional hazards is likely violated, motivating consideration of
  treatment effects such as the difference in restricted mean survival
  time or the use of approaches that are alternatives to the familiar
  logrank test, in which case the associated test statistics may not
  possess independent increments.  We show that, regardless of the
  covariance structure of sequentially-computed test statistics, one
  can always derive linear combinations of these test statistics
  sequentially that do have the independent increments property. We
  also describe how to best choose these linear combinations to target
  specific alternative hypotheses, such as proportional or
  non-proportional hazards or log odds alternatives.  We thus derive
  new, sequentially-computed test statistics that not only have the
  independent increments property, supporting straightforward use of
  existing methods and software, but that also have greater power
  against target alternative hypotheses than do procedures based on
  the original test statistics, regardless of whether or not the original
  statistics have the independent increments property.  We illustrate
  with two examples.}

\keywords{censored
  survival analysis,  group sequential test, independent increments,
  restricted mean survival time, Wilcoxon test}


\maketitle

\renewcommand\thefootnote{}

\renewcommand\thefootnote{\fnsymbol{footnote}}
\setcounter{footnote}{1}

\section{Introduction}\label{s:intro}

Interim monitoring of clinical trials to allow for the possibility of
early stopping for efficacy or futility is a critically important
practice in health sciences and biopharmaceutical research and is
ordinarily accomplished through the use of group sequential methods.
Key to straightforward implementation of such methods is the property
of independent increments of sequentially-computed test statistics;
see Kim and Tsiatis\cite{Kim} for a comprehensive overview.
Specifically, computation of sequential stopping boundaries to which
the test statistic is compared at interim analyses requires
multivariate integration with respect to the (usually asymptotic)
joint distribution of the test statistic across monitoring times.
However, if the increments between statistics used to construct the
test statistics are independent, the computation simplifies to involve
only simple univariate integration, which is the foundation for the
widely used group sequential methods \cite{Pocock,OBF,LanDeMets} that
are available in standard software.

In many trials, inference focuses on a treatment effect that can be
characterized by a single parameter in a relevant statistical model,
e.g., the difference of treatment means in the case of a continuous
outcome or the risk difference, risk ratio, or odds ratio with a
binary outcome, where the definition of the parameter is the same
across all time.  Under these conditions, it has been shown
\cite{Scharfstein, Jennison} that efficient estimators of such
treatment effect parameters, computed sequentially, have the
independent increments property, so that standard group sequential
methods are readily implemented.  However, there are many situations
where efficient test procedures may be unavailable or not
straightforward to implement.  For example, Van Lancker et
al. \cite{Lancker} discuss use of a test statistic based on an
estimator for the treatment effect parameter that incorporates
covariate adjustment for prespecified baseline covariates thought to
be prognostic for the outcome.  Although covariate adjustment can
enhance the precision of the final analysis, owing to the complexity
of the efficient estimator,  the estimator and thus statistic the authors
adopt for practical use is not efficient and thus does not enjoy
the independent increment structure, so that standard group sequential
methods cannot be used.

The independent increments property also may not hold because of the
nature of the treatment effect.  In clinical trials in chronic
disease and especially in cancer, the primary outcome is a time to an
event, and the most common characterization of treatment effect is
through the hazard ratio under the assumption of proportional hazards,
with the primary analysis typically based on the logrank test or Cox
model. However, the proportional hazards assumption is often violated,
which has led to interest in representing the treatment effect through
the difference in restricted mean survival time (RMST), the mean time
to the event restricted to a specified time $L$.  As discussed by
Murray and Tsiatis \cite{Murray} and Lu and Tian \cite{LuTian},
because of limited follow up, at the $j$th interim analysis at time
$t_j$, say, the parameter of interest is the difference in RMST to a
time $L_j \leq t_j$; thus, the parameter of interest changes over time,
and it can be shown that the independent increments property does not
hold for the corresponding statistic.

Test statistics that are motivated by intuitive considerations may or
may not possess the independent increments property.  For example, an
alternative approach when violation of the proportional hazards
assumption is thought to be due to differences that manifest early in
time is to base the analysis on Gehan's Wilcoxon test \cite{Gehan},
which gives more weight to events at early time points.  Here, the
null hypothesis of a difference in event time distributions cannot be
expressed in terms of a single parameter, and the statistic on which
this test is based has been shown to not possess the independent
increments property\cite{SludandWei}.  In contrast, that for the logrank test,
which can also be motivated from an intuitive perspective, does have
independent increments.

In this article, we show that, regardless of the covariance structure
of sequentially-computed statistics, and thus regardless of whether or
not they possess the independent increments property, it is always
possible to derive sequential linear combinations of these statistics
that do possess this property.  Thus, for a given statistic, a group
sequential test procedure based on these linear combinations can be
implemented using standard methods and software.  If the original
statistics are not efficient, our formulation shows that linear
combinations of them leading to independent increments can be chosen
judiciously to also result in more efficient tests.  As a result,
depending on the alternative of interest, basing interim monitoring on
tests involving such linear combinations of statistics may yield
stronger evidence for early stopping, even if the original statistics
possess the independent increments property.  In
Section~\ref{s:model}, we present the general statistical framework
and the main result. In Section~\ref{s:implement}, we show how the
result is used to choose linear combinations to yield independent
increments and increase efficiency.  We consider two examples in
Section~\ref{s:examples}: interim monitoring based on Gehan's Wilcoxon
test when the treatment effect is the difference in treatment-specific
survival distributions, and interim monitoring when the treatment
effect is the difference in treatment-specific RMST.

\section{Statistical framework and main result}\label{s:model}

The property of independent increments can be characterized formally
as follows.  Consider an arbitrary random vector
$Z = (Z_1,\ldots ,Z_K)^T$ with $(K \times K)$ covariance matrix
$\calV_Z$.  Then $Z$ has the independent increments property if
${ \mathcal{V}}_Z(j,k)={\mathcal{V}}_Z(j,j)$, $j\le k$,
$j,k=1,\ldots ,K$, where ${\mathcal{V}}_Z(j,k)$ denotes the $(j,k)$th
element of $\calV_Z$.

We now state and prove the general result that serves as the
foundation for the methodology.  To this end, let $X_1,\ldots, X_K$
denote time ordered, sequentially-computed random variables.  Later,
these variables will be (suitably normalized) statistics computed at
$K$ possible interim analyses of a clinical trial using all the
available data through each interim analysis.  Often, such 
statistics are asymptotically normally distributed and constructed so
that they have mean zero under the relevant null hypothesis of no
treatment effect.  Let $X = (X_1,\ldots ,X_K)^T$ be the $K$ random
variables collected into a vector, and, for $j = 1,\ldots,K$, define
the $(K \times 1)$ vector
$\underline{X}_j = (X_1,\ldots ,X_j,0,\ldots ,0)^T$; that is, the
$K$-dimensional vector leading with $(X_1,\ldots ,X_j)^T$ followed by
$(K-j)$ zeros. Denote the $(K \times K)$ covariance matrix of $X$ by
${\mathcal{V}}$ and the covariance matrix of $\underline{X}_j$ by
${\mathcal{V}}_j$, the $(K \times K)$ matrix with upper left
$(j \times j)$ submatrix equal to the covariance matrix of
$(X_1,\ldots ,X_j)^T$ and all remaining elements equal to zero.

For $j = 1,\ldots,K$, let $a_j=(a_{j1},\ldots ,a_{jj},0,\ldots ,0)^T$
be a $K$-dimensional vector, where the first $j$ components
$a_{j1},\ldots ,a_{jj}$ are constants followed by $(K-j)$ zeros.  Then
denote the sequentially-computed linear combinations of $X$ by
\begin{equation}\label{eq.1}
Y=(Y_1,\ldots ,Y_K)^T, \hspace{0.15in}  Y_1=a_1^TX,\ldots , Y_K=a_K^TX.
\end{equation}
Henceforth, assume that the covariance matrix $\mathcal{V}$ has full
rank and thus a unique inverse $\calV^{-1}$ and that the linear
combinations in (\ref{eq.1}) are nontrivial in the sense that
$Y_j=a_{j1}X_1+\ldots +a_{jj}X_j$, which in the context of interim
monitoring would be the linear combination computed at the $j$th
interim analysis, must include $X_j$; i.e., $a_{jj}\ne 0$,
$j=1,\ldots ,K$. This condition guarantees that the covariance matrix
of $Y$ also has full rank.  The key result, that these
sequentially-computed linear combinations of $X$ have the independent
increments property, follows from Theorem~\ref{thm1}.

\begin{theorem} \label{thm1} Under the above conditions, with $\calV$
  of full rank and $\calV_j$, $j = 1,\ldots,K$, defined as above, the
  vector of nontrivial, sequentially-computed linear combinations
  $Y=(Y_1,\ldots ,Y_K)^T$ has the independent increments property if
  and only if there exists a vector of constants
  $b=(b_1,\ldots ,b_K)^T$ such that
  $a_j=\mathcal{V}_j^{-} \underline{b}_j$, where
  $\underline{b}_j=(b_1,\ldots ,b_j,0,\ldots ,0)^T$, $j=1,\ldots ,K$,
  and $\calV_j^{-}$ is the $(K\times K)$ matrix with upper
  left hand $(j \times j)$ submatrix the inverse of the covariance
  matrix of $(X_1,\ldots ,X_j)^T$ and all remaining elements of the
  matrix equal to zero; i.e., $\calV_j^{-}$ is the
  Moore-Penrose generalized inverse of $\calV_j$.
  \end{theorem}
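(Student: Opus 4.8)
The plan is to prove both directions by reducing everything to the leading nonzero blocks of the objects involved. First I would record two preliminaries. Since $\calV$ is a full-rank covariance matrix it is positive definite, so each leading principal submatrix $\calV^{(j)}$ (the upper-left $(j\times j)$ block of $\calV$, equivalently the covariance matrix of $(X_1,\ldots,X_j)^T$) is positive definite and hence invertible; consequently $\calV_j^{-}$ is exactly the block matrix with $(\calV^{(j)})^{-1}$ in its upper-left corner and zeros elsewhere, which is indeed the Moore--Penrose inverse of $\calV_j$. Writing $\tilde a_j=(a_{j1},\ldots,a_{jj})^T$ and $\tilde b_j=(b_1,\ldots,b_j)^T$ for the leading $j$-subvectors, the stated condition $a_j=\calV_j^{-}\underline b_j$ is then equivalent to $\calV^{(j)}\tilde a_j=\tilde b_j$. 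Second, I would note the identity $\cov(X_i,Y_j)=(\calV^{(j)}\tilde a_j)_i$ for $i\le j$, which follows immediately from $Y_j=\sum_{p\le j}a_{jp}X_p$; this reinterprets the $i$th entry of $\calV^{(j)}\tilde a_j$ as a covariance and is the bridge between the algebraic condition and the probabilistic property.

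For the ``if'' direction, I would assume $\calV^{(j)}\tilde a_j=\tilde b_j$ holds for all $j$ with a common $b$ and evaluate $\cov(Y_j,Y_k)=\tilde a_j^T M\,\tilde a_k$ for $j\le k$, where $M$ is the top $(j\times k)$ block of $\calV$, namely the cross-covariance of $(X_1,\ldots,X_j)^T$ and $(X_1,\ldots,X_k)^T$. Splitting the columns of $M$ and the entries of $\tilde a_k$ at index $j$ and partitioning $\calV^{(k)}$ conformably, I would use the block form of the relation $\calV^{(k)}\tilde a_k=\tilde b_k$ to rewrite the contribution of the trailing part of $\tilde a_k$, and then repeatedly substitute $\tilde a_j^T\calV^{(j)}=\tilde b_j^T$. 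The cross terms cancel and $\cov(Y_j,Y_k)$ collapses to $\tilde a_j^T\tilde b_j=\var(Y_j)$. Since $\cov(Y_j,Y_k)=\var(Y_{\min(j,k)})$ is precisely the independent increments property, this direction follows from a direct, if slightly fiddly, block computation and presents no real obstacle.

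The converse is the substantive part, and the nontriviality hypothesis $a_{jj}\ne 0$ is exactly what makes it work. I would define $b_i=\cov(X_i,Y_i)$ and aim to show $\calV^{(j)}\tilde a_j=\tilde b_j$ for every $j$, which by the preliminary identity amounts to showing $\cov(X_i,Y_j)=b_i$, i.e. $\cov(X_i,Y_j-Y_i)=0$, for all $j\ge i$. The key step is that the matrix $A$ whose $j$th row is $a_j^T$ is lower triangular with nonzero diagonal entries $a_{jj}$, hence invertible with a lower-triangular inverse; therefore each $X_i$ is a linear combination of $Y_1,\ldots,Y_i$ only. Writing $X_i=\sum_{l\le i}(A^{-1})_{il}Y_l$ and expanding, I would reduce $\cov(X_i,Y_j-Y_i)$ to a sum of terms $\cov(Y_l,Y_j-Y_i)$ with $l\le i\le j$; independent increments give $\cov(Y_l,Y_j)=\cov(Y_l,Y_i)=\var(Y_l)$, so every such term vanishes. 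Hence $\cov(X_i,Y_j)=b_i$ for all $j\ge i$, which yields $\calV^{(j)}\tilde a_j=\tilde b_j$ and thus $a_j=\calV_j^{-}\underline b_j$ with the single vector $b=(b_1,\ldots,b_K)^T$. I expect the main obstacle to be precisely the recognition that one cannot argue entrywise from the independent increments of $Y$ alone: one must first invert the transformation to rewrite $X_i$ in terms of the past $Y$'s, and this is the step where $a_{jj}\ne 0$ is indispensable.
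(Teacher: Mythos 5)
Your proof is correct, and the converse---the substantive direction---is argued by a genuinely different route than the paper's. For the ``if'' direction the two arguments are essentially the same computation in different packaging: the paper verifies the matrix identity $\calV_j^{-}\calV\calV_k^{-}=\calV_j^{-}$ for $j\le k$ and then notes $\underline{b}_j^T\calV_j^{-}\underline{b}_k=\underline{b}_j^T\calV_j^{-}\underline{b}_j$, while you extract the first $j$ rows of $\calV^{(k)}\tilde a_k=\tilde b_k$ and pair them with $\tilde a_j$; your version makes more visible exactly where the requirement of a \emph{common} vector $b$ across analyses enters. For the converse, the paper sets $q_j=\calV_j a_j$, takes $b_j=q_{jj}$, and proves $q_{j\ell}=b_\ell$ by an entrywise induction, peeling off one coordinate at a time using $a_{\ell\ell}\ne 0$ at each step---in effect solving a triangular system by forward substitution. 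You instead invert the lower-triangular coefficient matrix $A$ once, write $X_i=\sum_{l\le i}(A^{-1})_{il}Y_l$, and observe that independent increments force $\cov(Y_l,Y_j-Y_i)=0$ for $l\le i\le j$, hence $\cov(X_i,Y_j)=\cov(X_i,Y_i)=b_i$; note that your $b_i=\cov(X_i,Y_i)$ coincides with the paper's $q_{ii}$, so the two constructions produce the same vector $b$. What your approach buys is conceptual transparency: it isolates the probabilistic meaning of the hypothesis (the increment $Y_j-Y_i$ is uncorrelated with the span of $Y_1,\ldots,Y_i$, which contains $X_i$) and makes the role of $a_{jj}\ne 0$ a single structural fact (invertibility of a triangular matrix) rather than a condition invoked repeatedly inside an induction; the paper's argument, by contrast, is entirely elementary and self-contained at the level of scalar equations. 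Both are valid, and your block-matrix preliminaries (positive definiteness of the leading principal submatrices, the equivalence of $a_j=\calV_j^{-}\underline{b}_j$ with $\calV^{(j)}\tilde a_j=\tilde b_j$, and the identification of $\calV_j^{-}$ as the Moore--Penrose inverse) are all accurate.
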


  \begin{proof}
First, assume that $a_j=\calV_j^{-} \underline{b}_j$ holds.  It is
straightforward that, in general,
${\mathcal{V}}_Y(j,k)=a_j^T{\mathcal{V}}a_k$, $j,k=1,\ldots ,K$;
thus, it follows that ${\mathcal{V}}_Y(j,k) =
\underline{b}_j^T\calV_j^{-}{\mathcal{V}}\calV_k^{-}\underline{b}_k$.  
However, for $j\le k$,
$\calV_j^{-}{\mathcal{V}}\calV_k^{-}=\calV_j^{-}$, so that 
${\mathcal{V}}_Y(j,k)=\underline{b}_j^T
  \calV_j^{-}\underline{b}_k=\underline{b}_j^T
  \calV_j^{-}\underline{b}_j$.  Thus, ${\mathcal{V}}_Y(j,k)={
    \mathcal{V}}_Y(j,j)$,  $j\le k, j, k=1,\ldots, K$, so that
  $Y$ has the independent increments property.

  Conversely, assume that $Y=(Y_1,\ldots ,Y_K)^T$ has the independent
  increments property.  We wish to show that there exists a vector
  $b=(b_1,\ldots ,b_K)^T$ such that $a_j=\calV_j^{-} \underline{b}_j$
  for $j=1,\ldots ,K$.  Define $q_j={\mathcal{V}}_ja_j$,
  $j=1,\ldots ,K$, in which case $a_j=\calV_j^{-}q_j$.  Note that,
  because ${\mathcal{V}}$ is of full rank, the $(j \times j)$ upper
  left submatrix of ${\mathcal{V}}$ has a unique inverse, which is the
  $(j \times j)$ upper left submatrix of $\calV_j^{-}$.  We now show
  that if $Y$, derived using nontrivial linear combinations, has the
  independent increments property, then if we take $b_j=q_{jj}$, the
  $j$th element of $q_j={\mathcal{V}}_ja_j$, then $a_j$ must equal
  $\calV_j^{-}\underline{b}_j$, which proves the converse.  As above,
  for $\ell \le \ell^{'}$,
\begin{equation}
 {\mathcal{V}}_Y(\ell,\ell^{'}) = a_{\ell}^T{\mathcal{V}}a_{\ell^{'}}
                                  = q_{\ell}^T\calV_{\ell}^{-}{\mathcal{V}}
       \calV_{\ell^{'}}^{-} q_{\ell^{'}}  = q_{\ell}^T\calV_{\ell}^{-} q_{\ell^{'}}  
                              = a_{\ell}^Tq_{\ell^{'}}. \label{eqn5}
\end{equation}
The proof is by induction.   Suppose that it has already been shown that 
that $q_{\ell}=\underline{b}_\ell$, where
$\underline{b}_\ell=(b_1,\ldots , b_{\ell},0,\ldots ,0)^T$, and
$b_{\ell}=q_{\ell\ell}$, $\ell=1,\ldots, j-1$.  Then we must 
show that $q_j=\underline{b}_j$, or $q_{j1}=b_1,\ldots
,q_{j,j-1}=b_{j-1}$.  Because $Y$ has the independent 
increments property, 
${\mathcal{V}}_Y(\ell,\ell^{'})={\mathcal{V}}_Y(\ell,\ell)$, $\ell\le
\ell^{'}, \ell,\ell^{'}=1,\ldots ,K$, which, by (\ref{eqn5}), implies
that
\begin{equation}\label{eqn6}
  a_{\ell}^Tq_{\ell^{'}}=a_{\ell}^Tq_{\ell}.
   \end{equation}
   Take $\ell=1$ and $\ell^{'}=j$.  Then (\ref{eqn6}) implies that
   $a_{11}q_{j1}=a_{11}q_{11}$, and because $a_{11}\ne 0$, it follows
   that $q_{j1}=q_{11}=b_1$.  Next, take $\ell=2$ and $\ell^{'}=j$.
   Then (\ref{eqn6}) implies that
   $a_{21}q_{j1}+a_{22}q_{j2}=a_{21}q_{21}+a_{22}q_{22}$.  We already
   showed that $q_{j1}=q_{11}$, and, by assumption. $q_{21}=b_1$.
   Thus, $ a_{21}b_1+a_{22}q_{j2}=a_{21}b_1+a_{22}q_{22}$.  Because
   $a_{22}\ne 0$, it follows that $q_{j2}=q_{22}=b_2$. Continuing in
   this fashion, we obtain 
  $q_{j\ell}=b_{\ell}$,   $\ell=1,\ldots, j-1$, and finally $q_{jj}=b_j$, completing the proof.  
\end{proof} 

\section{Implementation} \label{s:implement}

\subsection{Choice of linear combination}\label{s:genmod}

Theorem~\ref{thm1} is a very general result, demonstrating that 
linear combinations of time ordered random variables
$X=(X_1,\ldots, X_K)^T$ with covariance matrix ${\mathcal{V}}$ will
have the independent increments property if the coefficients of the
linear combination satisfy
$$a_j=\calV_j^{-} \underline{b}_j, \mbox{ where
}\underline{b}_j=(b_1,\ldots, b_j,0\ldots, 0)^T$$ for some arbitrary
vector $b=(b_1,\ldots, b_K)^T$. Consequently, to construct a linear
combination of elements of $X$ with independent increments, one can
choose any arbitrary $K$-dimensional vector of constants $b$. However,
depending on the objective, judicious choices of $b$ can be
identified.  

We now discuss considerations for the choice of $b$ when interest
focuses on a null hypothesis $H_0$ of no treatment effect to be tested
in a clinical trial potentially involving $n$ subjects at up to $K$
interim analyses.  Let $X_n = (X_{1,n}, \ldots, X_{K,n})^T$ be the
vector of suitably normalized statistics used to form the test
statistic for this hypothesis to be computed sequentially over time.
For example, if the test statistic is the score test statistic based
on a suitable log likelihood, $X_{j,n}$ is equal to $n^{-1/2}$ times
the score.  We emphasize that the test statistic computed in practice
at the $j$th interim analysis at time $t_j$ is based only on the data
accumulated through $t_j$ and does not involve $n$; we discuss this
further below.  We consider normalized statistics $X_{j,n}$ solely for
the theoretical developments that support how $b$ should be chosen.
Here, taking $n$ to grow large ($n \rightarrow \infty$) can be viewed
as allowing the staggered entry times at which subjects enroll in the
trial to become more and more dense.  The statistics can be based on
parametric, semiparametric, or nonparametric models and, under $H_0$,
generally satisfy $X_n \inD N( 0, \calV)$ as $n \rightarrow \infty$;
i.e., $X_n$ converges in distribution to a normal random vector with
mean zero and covariance matrix $\calV$, so that, for $n$ large,
$$X_n \simapprox N(0, \calV) \hspace{0.1in} \mbox{under $H_0$}.$$  
Letting $\widehat{\mathcal{V}}_n$ be a consistent estimator for
$\calV$, typically, at the $j$th interim analysis, one computes the
standardized test statistic
    $$X_{j,n} \big/\{\widehat{\mathcal{V}}_n(j,j)\}^{1/2},$$
    which thus has an approximate $N(0, 1)$ distribution under $H_0$,
    and $H_0$ is rejected if the test statistic (or absolute value
    thereof) exceeds some critical value.  In general, $X_n$ may or
    may not have the independent increments property (asymptotically).

    Ordinarily, the goal is for a test procedure to have adequate
    power to detect some alternative hypothesis of
    interest. Typically, under a fixed alternative $H_A$, common test
    procedures have the property that power converges to one as
    $n \rightarrow \infty$, so that usual asymptotic theory yields no
    insight into practical performance.  Thus, to achieve nontrivial
    results reflecting the large sample properties of a test, rather
    than consider a fixed alternative, a standard approach is to
    consider performance of the test under a sequence of so-called
    local alternatives $H_{A,n}$, which are such that the alternative
    hypothesis $H_{A,n}$ approaches the null $H_0$ as
    $n \rightarrow \infty$.  As ordinarily the case in such local
    power analysis, in the examples in Section~\ref{s:examples}, with
    fixed alternatives characterized by a quantity $\delta$ that
    equals zero under $H_0$, the alternative under $H_{A,n}$ is
    $\delta_n$, say, where $n^{1/2} \delta_n \rightarrow \tau$ as
    $n \rightarrow \infty$.  Under such local alternatives, it can be
    shown that $ X_n \inD N(\mu, \calV)$ for some $K$-dimensional
    vector $\mu = (\mu_1,\ldots,\mu_K)^T$ depending on $\tau$, where
    $\calV$ is the same covariance matrix as under $H_0$, so that
  $$X_n \simapprox N(\mu, \calV) \hspace{0.1in} \mbox{under $H_{A,n}$}.$$    
    Thus, to derive the optimal linear combination of $X_{1,n},\ldots,
    X_{j,n}$ at the $j$th interim analysis yielding the greatest power
    to detect the alternative $H_{A,n}$, consider the 
    test statistic
   $$\underline{c}_j^T
      \underline{X}_{j,n} \big/(\underline{c}_j^T {\mathcal{V}}_j
      \underline{c}_j)^{1/2},$$ where
    $\underline{c}_j=(c_{j1},\ldots, c_{jj},0,\ldots,  0)^T$.  The 
optimal such linear combination is that maximizing the 
noncentrality parameter
    \begin{equation}\label{eqn8}
\underline{c}_j^T      \underline{\mu}_j \big/(\underline{c}_j^T {\mathcal{V}}_j
      \underline{c}_j)^{1/2}, \hspace{0.15in} \underline{\mu}_j =
      (\mu_1, \ldots, \mu_j, 0, \ldots, ))^T.
  \end{equation}
  Maximizing the noncentrality parameter (\ref{eqn8}) can be
  accomplished using the Cauchy-Schwartz inequality, leading to the
  optimal choice of $\underline{c}_j$ maximizing power given by
  $\underline{c}_j^{opt}\propto{
    \mathcal{V}}_j^{-}\underline{\mu}_j$. This result suggests
  sequentially computing the optimal statistics
  \begin{equation}
  Y_{j,n}^{opt}\propto
  \underline{\mu}_j^T\calV_j^{-}\underline{X}_{j,n}.
\label{eq:Yoptjn}
  \end{equation}
  Letting $b\propto \mu$, Theorem~\ref{thm1} can be used to show that
  $Y_n^{opt}=(Y_{1,n}^{opt},\ldots, Y_{K,n}^{opt})^T$ has the independent
  increments property.

  Taking linear combinations in this way leads not only to statistics
  that yield powerful tests for detecting the alternative hypothesis
  $H_{A,n}$, but also results in test statistics that have the desired
  independent increments property. Thus, under $H_0$,
  $Y_n^{opt}=(Y_{1,n}^{opt},\ldots, Y_{K,n}^{opt})^T \simapprox
  N(0,{\mathcal{V}}_{Y^{opt}})$, where
  ${\mathcal{V}}_{Y^{opt}}(j,j)=\underline{\mu}_j^T
  \calV_j^{-}\underline{\mu}_j$, and
  ${\mathcal{V}}_{Y^{opt}}(j,k)= {\mathcal{V}}_{Y^{opt}}(j,j)$ for
  $j\le k$; and, under $H_{A,n}$,
  $Y_n^{opt} \simapprox N(\mu_{Y^{opt}}, {\mathcal{V}}_{Y^{opt}})$,
  where
  $\mu_{Y^{opt}}=\underline{\mu}_j^T \calV_j^{-}\underline{\mu}_j$.
  The test at the $j$th interim analysis is based on the (standardized)
  test statistic
  \begin{equation}
 Y_{j,n}^{opt} \big/(\underline{\mu}_j^T
 \calV_j^{-}\underline{\mu}_j)^{1/2}.
 \label{eq:Yopt}
 \end{equation}
 Because of the independent increments property, level $\alpha$
 sequential tests and boundaries can be computed readily using
 standard methods and existing software.  Note that, for standardized
 test statistics, dependence on $n$ in the numerator and denominator
 cancels, so that all necessary calculations depend only on the data
 that have accumulated at each interim analysis.

 The foregoing development assumes that a specific type of alternative
 to the null hypothesis is of interest, and the test statistic
 (\ref{eq:Yopt}) is derived to have high power against that type of
 alternative.  If in truth the distribution away from the null
 hypothesis is not consistent with such alternatives, then the test
 statistics derived above will not necessarily be optimal, but they
 will still have the independent increment property under $H_0$, and
 thus group sequential methods applied to them will preserve
 the type I error.

 \subsection{Considerations for efficient tests}
 \label{s:efftests}
 
 If the original statistics $X_{1,n},\ldots, X_{K,n}$ are indeed
 efficient for the alternative $H_{A,n}$, e.g., as for a score test
 for treatment effect parameters in a relevant parametric or
 semiparametric model, then it is well known that $X_n$ has the
 independent increments property, so that the covariance matrix
 ${\mathcal{V}}$  satisfies ${\mathcal{V}}(j,k)={\mathcal{V}}(j,j)$, $j\le k$, and the
 mean under the alternative hypothesis is proportional to the
 variance; that is, $\mu_j=\gamma {\mathcal{V}}(j,j)$.  Then, from
 (\ref{eq:Yoptjn}), 
 $$Y_{j,n}^{opt}=  \{ \gamma {\mathcal{V}}(1,1),\ldots, \gamma {\mathcal{V}}(j,j),0,\ldots, 0\} 
\calV_j^{-}\underline{X}_{j,n}.$$ However, under independent increments,
the row vector  
$\{ {\mathcal{V}}(1,1),\ldots, {\mathcal{V}}(j,j),0,\ldots, 0\}$ is the same as the $j$th row of
${\mathcal{V}}_j$. Thus,
 $$\{\gamma {\mathcal{V}}(1,1),\ldots, \gamma
   {\mathcal{V}}(j,j),0,\ldots, 0)\}{\mathcal{V}}_j^{-}=
   \gamma (0,\ldots, 0,1,0,\ldots, 0),$$
   a row vector with $j$th element $\gamma$ and zeros otherwise,
   so that $Y_{j,n}^{opt}=\gamma X_{j,n}$, $j=1,\ldots, K$, and the standardized test
   statistic as in (\ref{eq:Yopt}) based on $Y_{j,n}^{opt}$ is the same as the
   standardized test statistic based on $X_{j,n}$. That is, if we start with efficient
   tests, then the linear transformation for improvement leads us back
   to the same set of tests.

   In some settings, the treatment effect can be characterized in
   terms of a scalar parameter $\theta$ in a parametric or
   semiparametric model, with null hypothesis $H_0:\theta=0$, and a
   consistent and asymptotically normal estimator for $\theta$ is
   available that can be used to form a test statistic for $H_0$.  Let
   $\widehat{\theta}_j$ be the estimator for $\theta$ based on all the
   available data through interim time $t_j$, $j=1,\ldots, K$, and
   define
   $\underline{\widehat{\vartheta}}=(\widehat{\theta}_1,\ldots,
   \widehat{\theta}_K)^T$.  These estimators may or may not be
   efficient estimators for $\theta$ in the sense of achieving the
   Fisher information bound.  First consider the case where the
   estimators are indeed efficient, under which Scharfstein et
   al. \cite{Scharfstein} and Jennison and Turnbull \cite{Jennison}
   have shown that $X_n = n^{1/2} \underline{\widehat{\vartheta}}$ has the 
   independent increments property under $H_0$.  Importantly, in the case of
   sequentially-computed estimators, with ${\mathcal{V}}_{\theta}$ the (asymptotic)
 covariance matrix of $X_n$, the independent increments
 property takes the form
 \begin{equation}
 {\mathcal{V}}_{\theta}(j,k)={\mathcal{V}}_{\theta}(k,k), \,\,\, \mbox{ for }
 j\le k,
 \label{eq:estimatorindinc}
\end{equation}
 where ${\mathcal{V}}_{\theta}(j,k)$ is the $(j,k)$th
 element of ${\mathcal{V}}_{\theta}$.  Here, the variances
 ${\mathcal{V}}_{\theta}(j,j)$ of $X_{j,n} = n^{1/2}\widehat{\theta}_j$,
 $j = 1,\ldots,K$, decrease with $j$ as more data accrue, and the
 covariance between $n^{1/2}\widehat{\theta}_j$ and $n^{1/2}\widehat{\theta}_k$ is
 equal to the variance of the latter.

 Under a local alternative $H_{A,n}: \theta = \theta_n$, where
 $n^{1/2} \theta_n \rightarrow \tau$,
 $X_n = n ^{1/2} \underline{\widehat{\vartheta}}$ converges in
 distribution to a normal random vector with mean $\tau 1_K$ and
 covariance matrix ${\mathcal{V}}_{\theta}$, where
 $1_K =(1,\ldots, 1)^T$ is a $K$-dimensional vector of ones.  Thus, the
 optimal choice $b^{opt}$ for the vector $b$ is proportional to
 $1_K$; and, defining $\calV_{\theta,j}$ and $\calV_{\theta,j}^{-}$
 analogously to $\calV_j$ and $\calV_j^{-}$, respectively, the optimal
 choice for $Y_{j,n}^{opt}$ is given by
    $$Y_{j,n}^{opt}=1_j^T{\mathcal{V}}_{\theta,j}^{-} X_{j,n} = n^{1/2}1_j^T{\mathcal{V}}_{\theta,j}^{-}\underline{\widehat{\vartheta}}_j ,$$
    where $1_j=(1,\ldots, 1,0,\ldots, 0)^T$ is the $K$-dimensional
    vector with $j$ ones followed by $(K-j)$ zeros, and
    $\underline{\widehat{\vartheta}}_j = (\widehat{\theta}_1,\ldots,
    \widehat{\theta}_j,0,\ldots,0)^T$. Under the independent
    increments structure (\ref{eq:estimatorindinc}) for estimators, the $j$th row of
    ${\mathcal{V}}_{\theta,j}$ is equal to
    ${\mathcal{V}}_{\theta,j}(j,j)1_j^T$. Consequently,
    ${\mathcal{V}}_{\theta,j}(j,j)1_j^T
    {\mathcal{V}}_{\theta,j}^{-}=(0,\ldots ,0,1,0,\ldots, 0)$ defined
    above, so that
    $1_j^T {\mathcal{V}}_{\theta,j}^{-}=(0,\ldots
    ,0,1/{\mathcal{V}}_{\theta,j}(j,j),0,\ldots, 0)$, and
$$Y_{j,n}^{opt}=n^{1/2}\widehat{\theta}_j \big/ \mathcal{V}_{\theta,j}(j,j).$$ That
is, the optimal linear combination of
$\widehat{\theta}_1,\ldots, \widehat{\theta}_j$ leading to $Y_{j,n}^{opt}$ 
involves only $\widehat{\theta}_j$.  Thus, for $j\le k$
\begin{eqnarray*}
\cov(Y_{j,n}^{opt},Y_{k,n}^{opt})=  n \,\cov(\widehat{\theta}_j, \widehat{\theta}_k)\big/\{{
                           \mathcal{V}}_{\theta,j}(j,j) {
                           \mathcal{V}}_{\theta,k}(k,k)\} 
= {\mathcal{V}}_{\theta,k}(k,k)\big/\{{\mathcal{V}}_{\theta,j}(j,j) {
      \mathcal{V}}_{\theta,k}(k,k) \} =
  1\big/{\mathcal{V}}_{\theta,j}(j,j)=\var(Y_{j,n}^{opt}), 
\end{eqnarray*}
demonstrating independent increments. Moreover, the mean of
$Y_{j,n}^{opt}$ is equal to
$\tau\big/ {\mathcal{V}}_{\theta,j}(j,j)=\tau \var(Y_{j,n}^{opt})$,
and it is straightforward that the standardized test statistic
$Y_{j,n}^{opt}\big/ \{\var(Y_{j,n}^{opt})\}^{1/2}$ is the same as the
standardized test statistic 
$n^{1/2} \widehat{\theta}_j/ \{{\mathcal{V}}_{\theta,j}(j,j)\}^{1/2} = 
\widehat{\theta}_j\big/\{\var(\widehat{\theta}_j)\}^{1/2}$. That is,
using efficient linear combinations of efficient estimators resulting
in the independent increments property to construct a test statistic
for $H_0:\theta=0$ leads to the same result as that using the
efficient estimators alone.  In practice, an estimator for
$\var(\widehat{\theta}_j)$ would be substituted.

In the case where the estimators for $\theta$ are not efficient and
thus do not necessarily have the independent increments property, it
is still advantageous to take such linear combinations.  Specifically,
taking $b=1_K$, compute $(Y_{1,n}^{opt},\ldots, Y_{K,n}^{opt})^T$, where
$Y_{j,n}^{opt}=n^{1/2} 1_j^T{
  \mathcal{V}}_{\theta,j}^{-}\underline{\widehat{\vartheta}}_j$, and
reject the null hypothesis whenever the standardized test statistic
$$n^{1/2} 1_j^T{
  \mathcal{V}}_{\theta,j}^{-}\underline{\widehat{\vartheta}}_j\big/
\{1_j^T{ \mathcal{V}}_{\theta,j}^{-}1_j\}^{1/2}$$ exceeds some
boundary value.  In practice, an estimator for
$\mathcal{V}_{\theta,j}$ would be substituted.  By construction,
these tests have the independent increments property, so that standard
group sequential software can be used to compute the boundary values,
and will have higher power than the usual test based on the statistic
$\widehat{\theta}_j\big/\{\var(\widehat{\theta}_j)\}^{1/2}$.  This
approach is used by Van Lancker et al. \cite{Lancker} in the situation
described in Section~\ref{s:intro}.  Namely, by taking linear
combinations of sequentially-computed, potentially inefficient
covariate adjusted estimators $\widehat{\theta}_j$, $j = 1,\ldots,K$,
that do not have the independent increments property, these authors
derived a test enjoying this property that is potentially more
efficient.
 
\section{Examples}\label{s:examples}

\subsection{Preliminaries}
\label{s:prelim}

We now consider two settings involving a time to event outcome where
an existing test procedure for a relevant null hypothesis $H_0$ of no
treatment effect is based on statistics without the independent
increments property.  For each, we demonstrate how the foregoing
developments can be used to obtain modified statistics that have this
property and may target specific types of alternatives of interest.  As is
the case in many such formulations, the derivation is based on
identifying the influence function \cite{TsiatisBook}  associated with the original test
statistic under $H_0$ by representing the statistic as asymptotically
equivalent to a sum of independent and identically distributed (iid)
quantities, from which obtaining the covariance matrix of
sequentially-computed statistics is straightforward.  As such, these
examples illustrate how the proposed methodology can be implemented in
practice.

Consider a clinical trial comparing two treatments coded as 0 and 1,
where up to $n$ individuals enter the study in a staggered fashion at
iid times $E_1 \leq \cdots \leq E_n$ measured from the start of the
trial at time $t=0$.  For definiteness, suppose that the interim
analyses are conducted (if the trial has not already been stopped) at
times $t_1 < \cdots < t_K$, with $t_1>0$.  Let $Z \in \{0, 1\}$ indicate
randomized treatment assignment, and denote the event time for an
arbitrary individual, measured from study entry, by $T$.  Define the
treatment-specific time to event/survival distributions as
$S_z(u) = P(T \ge u \mid Z=z)$, $z = 0, 1$, with corresponding hazard
functions $\lambda_z(u)$, $z=0, 1$.  For simplicity, we assume that
censoring is only for administrative reasons; that is, at any study
time $t$, for an individual who enters the study at time $E$ prior to
$t$, the event time $T$ is observed if $T\le t-E$; otherwise, the
event time is censored at $t-E$. Other forms of censoring, e.g., due
to loss to follow up or drop out, can be accommodated under suitable
assumptions.  Assuming that subjects enter the trial according to a
completely random process, $E \independent (T, Z)$, where
``$\independent$'' denotes ``independent of;'' and, indexing subjects
by $i$, we take $(E_i, Z_i, T_i)$, $i = 1, \ldots, n$, to be iid.  At
interim analysis time $t$, we observe data only for individuals $i$
for whom $E_i \leq t$, and, as above, we observe $T_i$ if
$T_i \leq t-E_i$; otherwise, $T_i$ is censored at time $t - E_i$.

\subsection{Interim monitoring based on Gehan's Wilcoxon test}
\label{s:sludandwei}

In the setting of time to event outcome, a null hypothesis of central
interest is that of equality of the treatment-specific survival
distributions; i.e., $H_0: S_1(u)=S_0(u)$ for all $u \geq 0$, which is
equivalent to that of equality of the corresponding hazard functions,
$H_0:\lambda_1(u)=\lambda_0(u)$.  As in Section~\ref{s:intro}, when
the assumption of proportional hazards is likely to be violated,
analysts may wish to base a test of $H_0$ on a test statistic other
than that for the standard logrank test.  When differences in
survival are anticipated to occur early in time, Gehan's Wilcoxon
test, which may be more sensitive to such differences, is an
attractive alternative; for example, Jiang et al. \cite{NIAITP}
demonstrate this advantage in analyses of data from the National
Institute on Aging Interventions Testing Program, where the
interventions compared are expected to influence mortality prior to
middle age but not afterward.

Slud and Wei \cite{SludandWei} have shown that the statistics on which
Gehan's Wilcoxon test is based with censored survival data, computed
sequentially, do not have the independent increments property,
complicating the test's use in the context of interim monitoring.
These authors demonstrate how group sequential tests that preserve the
desired significance level can be constructed using an
$\alpha$-spending function and recursively computing multivariate
normal integrals to obtain stopping boundaries based on the asymptotic
joint distribution of the statistics under the null hypothesis.
However, given the appeal of using standard software for this purpose,
we show how the main result in Section~\ref{s:model} can be used to
obtain modified statistics that do have the independent increments
property by identifying appropriate linear combinations of the
sequentially-computed statistics while not sacrificing power.  

Under $H_0$, $S_0(u) = S_1(u) = S(u)$ and
$\lambda_1(u)=\lambda_0(u) = \lambda(u)$, say.  Define the event time
counting process $N_T(u)=I(T\le u)$ and at risk process
$\mathcal{Y}_T(u)=I(T\ge u)$, where $I(\,\cdot\,)$ denotes the indicator
function, and let $dM_T(u)=dN_T(u)-\lambda(u) \mathcal{Y}_T(u)du$ denote the
associated martingale increment.  Tarone and Ware \cite{Tarone} have
shown that Gehan's Wilcoxon test statistic with censored data can be
written equivalently as a weighted logrank test; namely, at time $t$,
the (normalized) statistic on which the test is based is given by
$$G_n(t)=n^{-1/2} \sum_{i=1}^n I(E_i\le
t)\int_0^t\frac{W_n(u,t)}{n}\{Z_i-\bar{Z}(u,t)\}dN_{T_i}(u)I(t-E_i\geq
u), \hspace{0.15in}
W_n(u,t)=\sum_{i=1}^n I(T_i\geq u,t-E_i\geq u),$$ and
$$\bar{Z}(u,t)=\frac{\sum_{i=1}^n Z_i I(T_i\geq u,t-E_i\geq u)}{\sum_{i=1}^n
  I(T_i\geq u,t-E_i\geq u)}.$$
By an algebraic identity, under $H_0$,
$$G_n(t)=n^{-1/2} \sum_{i=1}^n I(E_i\le
t)\int_0^t\frac{W_n(u,t)}{n}\{Z_i-\bar{Z}(u,t)\}dM_{T_i}(u)I(t-E_i \geq u),$$
and it can be shown that 
$$G_n(t)=n^{-1/2}\sum_{i=1}^n IF_i(t)+o_P(1),$$
where 
\begin{equation}\label{eqn8.5}
  IF_i(t)=I(E_i\le t)\int_0^tw(u,t)(Z_i-\pi)dM_{T_i}(u)I(t-E_i \geq u)
\end{equation}
is the $i$th influence function of $G_n(t)$; $w(u,t)=P(T\geq u, t-E\geq u)$ is
the limit in probability of $W_n(u,t)/n$; $\pi = P(Z=1)$ is the limit
in probability of $\bar{Z}(u,t)$, which follows because
$E \independent (Z,T)$ and $Z \independent T$ under the null
hypothesis; and $o_P(1)$ is a term that converges to zero as
$n \rightarrow \infty$.  Thus,
$$G_n(t)  \inD N\left[\,0,\var\{IF(t)\}\,\right] \hspace{0.1in} \mbox{under $H_0$},$$
and $\{G_n(s), G_n(t)\}$, $s < t$, converges in distribution under $H_0$ to
a bivariate normal random vector with mean zero and covariance matrix
 \[
  \left[ {\begin{array}{cc}
   \var\{IF(s)\} & \cov\{IF(s),IF(t)\}\\
   \cov\{IF(s),IF(t)\} &  \var\{IF(t)\}\\
  \end{array} } \right].
\]
By standard counting process martingale results, and absorbing
$I(E\le t)$ into $I(T\ge u,t-E\ge u)$ for all $u\ge 0$,
\begin{eqnarray}
  \var\{IF(t)\} &= &E \left\{\int_0^t w^2(u,t)(Z-\pi)^2I(T\ge u,t-E\ge
  u)\lambda(u)du\right\} \nonumber \\
 &= &\pi(1-\pi)\int_0^t w^2(u,t)E\{I(T\ge u,t-E\ge
                  u)\}\lambda(u)du =\pi(1-\pi)\int_0^t w^3(u,t)\lambda(u)du, \label{eqn11}
\end{eqnarray}
where, under $H_0$, (\ref{eqn11}) follows by taking the expectation
inside the integral.  Similarly, for $s\le t$,
\begin{eqnarray}
  \cov\{IF(s),IF(t)\} &= &E \left\{\int_0^s w(u,s)w(u,t)w((Z-\pi)^2I(T\ge u,s-E\ge
  u)\lambda(u)du\right\} \nonumber \\
  &= &\pi(1-\pi)\int_0^s
                        w(u,s)w(u,t)E\{I(T\ge u,s-E\ge
                  u)\}\lambda(u)du = \pi(1-\pi)\int_0^s w^2(u,s)w(u,t)\lambda(u)du. \label{eqn12}
\end{eqnarray}
From (\ref{eqn11}) and (\ref{eqn12}), in general,
$\cov\{IF(s),IF(t)\}\ne \var\{IF(s)\}$ for $s\le t$, and thus the
statistic $G_n(t)$ does not have the independent increments property.

Let $X_{j,n} = G_n(t_j)$, $j = 1,\ldots,K$, denote the original
sequential statistics, and let $\mathcal{V}_{IF}$ be the $(K \times
K)$ asymptotic covariance matrix of $(X_{1,n},\ldots,X_{j,n})^T$, 
obtained from (\ref{eqn11}) and
(\ref{eqn12}).  Gehan's Wilcoxon test statistic at the $j$th interim
analysis would be computed as
$$X_{j,n} \big/ \{ \widehat{\mathcal{V}}_{IF}(j,j)\}^{1/2},$$
where $\widehat{\mathcal{V}}_{IF}$ is an estimator for
${\mathcal{V}}_{IF}$ obtained by estimating (\ref{eqn11}) and
(\ref{eqn12}).  Namely, using standard counting process methods, and defining
$\widehat{\pi}(t)=\sum_{i=1}^n Z_i I(E_i\le t)\big/\sum_{i=1}^n
I(E_i\le t)$, $\var\{IF(t)\}$ 
in (\ref{eqn11}) can be estimated by 
\begin{eqnarray}
\widehat{\var}\{IF(t)\}=\widehat{\pi}(t)\{1-\widehat{\pi}(t)\}\int_0^t\left\{\frac{W_n(u,t)}{n}\right\}^3\frac{dN(u,t)}{W_n(u,t)}=\widehat{\pi}(t)\{1-\widehat{\pi}(t)\}\int_0^t\left\{\frac{W_n(u,t)^2}{n^3}\right\}dN(u,t), \label{eqn13}
\end{eqnarray}
where $N(u,t)=\sum_{i=1}^n \{ I(E_i\leq t) I(T_i \leq u, T_i \leq
t-E_i)\}$, and,
similarly, a consistent estimator for $\cov\{IF(s),IF(t)\}$ in
(\ref{eqn12}) is given by
  \begin{equation}\label{eeqn14}
\widehat{\cov}\{IF(s),IF(t)\}=\widehat{\pi}(t)\{1-\widehat{\pi}(t)\}\int_0^s\left\{\frac{W_n(u,s)^2}{n^3}\right\}dN(u,t).
\end{equation}

We now describe several approaches to choosing linear combinations of
the sequentially-computed statistics $X_{j,n}$ that result in
statistics that do have the independent increments property, which
involve choice of the vector of constants $b = (b_1,\ldots, b_K)^T$.
In the first two approaches, we choose $b$ to target specific
alternative hypotheses.  It is well known that the Wilcoxon test in
the case of no censoring has high power to detect log-odds
alternatives; i.e., where
\begin{equation}\label{eqn14.5}
  \log\left\{\frac{S_1(u)}{1-S_1(u)}\right\}=\log\left\{\frac{S_0(u)}{1-S_0(u)}\right\}+\delta,
  \hspace*{0.1in}\mbox{ equivalently }\hspace*{0.1in}
  S_1(t)=S_1(u, \delta) = \frac{S_0(t) \exp(\delta)}{1+S_0(t)\{\exp(\delta)-1\}},
  \end{equation}
  and $\delta$ denotes the treatment effect parameter.  We show in 
  Appendix~\ref{app:a} that, under local log-odds alternatives with treatment
  effect parameters $\delta_n$ such that
  $n^{1/2}\delta_n\rightarrow \tau$, the resulting Gehan's Wilcoxon
  statistic $G_n(t)$ converges in distribution to a normal random
  variable with variance $\var\{IF(t)\}$ (same as under the null
  hypothesis) but with mean
\begin{equation}\label{eqn14.6}
  \mu(t)=-\tau \pi(1-\pi)\int_0^tw^2(u,t)S(u)\lambda(u)du.
  \end{equation}
A consistent estimator for $\mu(t)$ in (\ref{eqn14.6})  is given by
\begin{equation}\label{eqn14.7}
\widehat{\mu}(t)=-\tau
\widehat{\pi}(t)\{1-\widehat{\pi}(t)\}\int_0^t\left\{\frac{W_n(u,t)}{n^2}\right\}\widehat{S}(u,t)dN(u,t),
\end{equation}
where
$\widehat{S}(u,t)$ is the Kaplan-Meier estimator of the survival
distribution under $H_0$ using all the available censored survival
data combined over both treatments through time $t$. The proposed test
statistic at the $j$th interim analysis is then
\begin{equation}
Y_{j,n} \big/
(\underline{\widehat{\mu}}_j^T\widehat{{\mathcal{V}}}_{IF,j}^{-}
\underline{\widehat{\mu}}_j)^{1/2}, \hspace{0.15in}
Y_{j,n}=\underline{\widehat{\mu}}_j^T\widehat{{\mathcal{V}}}_{IF,j}^{-}
\underline{X}_{j,n},
\label{eq:teststat}
\end{equation}
where, with $\widehat{\mu}(t) $ as in (\ref{eqn14.7}),
$\underline{\widehat{\mu}}_j=\{\widehat{\mu}(t_1),\ldots,
\widehat{\mu}(t_j),0,\ldots, 0\}^T$, and 
$\widehat{\mathcal{V}}_{IF,j}$ is the
$(K\times K)$ matrix with  upper left hand $(j\times j)$ submatrix
that of $\widehat{\mathcal{V}}_{IF}$ and the
remainder of the matrix zeros. The $Y_{j,n}$, $j=1,\ldots,K$,
have the independent increments property, and the test should be more
powerful against the log-odds alternative than that based on the
original Wilcoxon statistics $X_{1,n},\ldots,X_{j,n}$.

In some settings, a treatment may be expected to have a delayed
effect.  We thus consider the alternative hypothesis where
\begin{equation}\label{eqn14.56}
\lambda_1(u) = \lambda_1(u,\delta)=\lambda_0(u)I(u
\leq\mathcal{T}_{delay})+\lambda_0(u)\exp(\delta)I(u >\mathcal{T}_{delay});
\end{equation}
that is, the hazard function for treatment 1 is the same as that for
treatment 0 through some delay time $\mathcal{T}_{delay}$ and then is
proportional to that for treatment 0 by a proportionality constant
$\exp(\delta)$. We refer to this as a non-proportional hazards
alternative, with the null hypothesis being $\delta=0$.  We show in
Appendix~\ref{app:b} that, under local alternatives $\delta_n$ such that
$n^{1/2}\delta_n\rightarrow \tau$, the Gehan's Wilcoxon
statistic $G_n(t)$ converges in distribution to a normal random variable
with variance $\var\{IF(t)\}$, but mean
\begin{equation}\label{eqn14.65}
  \mu(t)=-\tau \pi(1-\pi)\int_{\mathcal{T}_{delay}}^tw^2(u,t)\lambda_0(u)du.
  \end{equation}
A consistent estimator for $\mu(t)$ in (\ref{eqn14.65}) is given by
\begin{equation}\label{eqn14.75}
\widehat{\mu}(t)=-\tau
\widehat{\pi}(t)\{1-\widehat{\pi}(t)\}\int_{\mathcal{T}_{delay}}^t\left\{\frac{W_n(u,t)}{n^2}\right\}dN(u,t).
\end{equation}
As above, the test statistic at the $j$th interim analysis is of the
form (\ref{eq:teststat}), where now
$\underline{\widehat{\mu}}_j=\{\widehat{\mu}(t_1),\ldots,
\widehat{\mu}(t_j),0,\ldots, 0\}^T$ with $\widehat{\mu}(t) $ as in
(\ref{eqn14.75}), and the $Y_{j,n}$, $j=1,\ldots,K$, have the
independent increments property, so that the test should be more
powerful against this alternative than that based on
$X_{1,n},\ldots,X_{j,n}$.  Note that if $\mathcal{T}_{delay} = 0$,
then 
(\ref{eqn14.56}) reduces to a proportional hazards alternative.

An ad hoc approach to choosing $b$ without specifying a targeted
alternative is to proceed as if the statistics were efficient with
independent increments, in which case the mean of the
statistics would be (asymptotically) proportional to the variance;
i.e., $\mu_j=E(X_{j,n}) = E\{G_n(t_j)\} \propto \var\{G_n(t_j)\}=\var\{IF(t_j)\}$,
$j = 1,\ldots,K$.  Then, as above, $b$ would be chosen to be proportional
to $[\var\{IF(t_1)\},\ldots, \var\{IF(t_K)\}]^T$.  Although the
Wilcoxon statistics are not efficient, this approach is simple to
implement and leads to the modified statistics
$$Y_{j,n}=\widehat{\underline{b}}_j^T\widehat{\mathcal{V}}_{IF,j}^{-}\underline{X}_{j,n},$$
where
$\widehat{\underline{b}}_j=[\widehat{\var}\{IF(t_1)\},\ldots
, \widehat{\var}\{IF(t_j)\},0,\ldots, 0]^T$, which have the independent
increments property.

We demonstrate the performance of these tests in a suite of simulation
studies, each involving 10,000 Monte Carlo group sequential trials.
In all studies, $S_0(u)$ follows an exponential distribution with
constant hazard rate equal to 1, $S_0(u)=\exp(-u)$. We consider three
sets of alternatives: proportional hazards, where $S_1(u)=S_1(u, \delta)$ is
exponential with constant hazard rate $\exp(\delta)$,
$S_1(u)=S_1(u, \delta) = \exp\{-u \exp(\delta)\}$; log-odds as in (\ref{eqn14.5}); and
non-proportional hazards as in (\ref{eqn14.56}) with $\mathcal{T}_{delay}=0.6$.  In all three cases,
$\delta=0$ corresponds to the null hypothesis.  In each trial,
individuals enter according to a uniform $U(0,2)$ distribution, and
individuals are randomized to treatments 0 and 1 with equal
probability, $\pi = P(Z = 1) = P(Z= 0) = 0.5$.  Each trial is monitored at
five interim analysis times $t=(1.0,1.5,2.0,2.5,3.0)$, and in all cases
the nominal significance level is 0.05.

We consider several group sequential test procedures based on Gehan's
Wilcoxon test.  The first, denoted as the unadjusted Wilcoxon test,
does not take account of the fact that the usual Gehan's Wilcoxon
statistics do not have the independent increments property and naively
computes stopping boundaries using standard methods for tests based on
independent increments.  The second, referred to as the adjusted
Wilcoxon test, is based on the usual Gehan's Wilcoxon statistics, with
boundaries computed to preserve the desired significance level using
an $\alpha$-spending function and recursive multivariate normal
integration using the \texttt{mvtnorm} package in R \cite{mvtnorm}
based on the asymptotic joint distribution of the
sequentially-computed statistics under the null hypothesis, as
proposed by Slud and Wei \cite{SludandWei}.  Four additional tests
based on the modified Wilcoxon statistics leading to independent
increments are also evaluated: the ad hoc version with the constant
vector $b$ chosen to be proportional to the variance of the
sequentially-computed Wilcoxon statistic, denoted Wilcoxon I; that
with $b$ chosen to favor log-odds alternatives according to
(\ref{eqn14.7}), Wilcoxon II; that with $b$ chosen to favor
proportional hazards alternatives according to (\ref{eqn14.75}) with
$\mathcal{T}_{delay}=0$, Wilcoxon III; and that with $b$ chosen to
favor non-proportional hazards alternatives according to
(\ref{eqn14.75}) with $\mathcal{T}_{delay}=0.6$, Wilcoxon IV.  For
comparison, we consider the logrank test, which has power targeted for
proportional hazards alternatives.

In each study, the potential sample size $n= 1000$, and we consider an
$\alpha$-spending function at each of the five interim analysis times
of $(0.05,0.1,0.4,0.7,1)\times 0.05$.  All tests are two-sided, and
the null hypothesis is rejected whenever the absolute value of the
standardized test statistics exceeds the boundary values, which for
the unadjusted Wilcoxon and Wilcoxon I-IV are computed in all cases
using the \texttt{ldbounds} package in R \cite{ldbounds1} and for the
adjusted Wilcoxon are computed as above.  For each simulation
scenario, we compute the empirical level and power of each test as
the proportion of the 10,000 trials for which the null hypothesis is
rejected and also report the Monte Carlo average of the number of
analyses conducted before stopping the trial.

Results are presented in Table~\ref{t:one}.  Under the null
hypothesis, all tests except the unadjusted Wilcoxon have empirical
type I error close to the nominal 0.05, which is not surprising given
that incorrect stopping boundaries are used for the latter.  As
expected, the logrank test has greatest power to detect a proportional
hazards alternative, followed by the Wilcoxon III, which targets this
alternative. The Wilcoxon II test, which targets log-odds
alternatives, has the highest power in this scenario, along with the
adjusted Wilcoxon test.  The greatest difference in power across tests
occurs under the non-proportional hazards alternative, where the
Wilcoxon IV test targeting such alternatives achieves considerably
higher power than the other tests, followed by the logrank test.  The
adjusted Wilcoxon test achieves similar power to Wilcoxon II across
alternatives.  The average number of analyses conducted is similar
across tests but somewhat lower for those with higher power, as expected.

 \begin{table}
   \caption{Probability of rejecting the null hypothesis and average
     number of analyses before stopping based on 10,000 Monte Carlo
     group sequential trials.  \label{t:one}}
    \centering
 \begin{tabular}{l l l l l }
   \hline
   & \textbf{Null}$^{\rm a}$ & \textbf{Prop-haz} & \textbf{Log-odds} & \textbf{Non-prop-haz}\\
   \textbf{Test}$^{\rm b}$ & $\delta=0.00$ & $\delta=0.23$ & $\delta=0.32$ & $\delta=0.47$\\
   \hline
   Wilcoxon (unadjusted) & 0.042 & 0.812 (3.62)$^{\rm c}$& 0.791 (3.43) & 0.279 (4.85)\\
   Wilcoxon (adjusted)    & 0.049 & 0.830 (3.56)                & 0.813 (3.37) & 0.301 (4.83)\\
   Wilcoxon I                  & 0.051& 0.807 (3.66)                 & 0.754 (3.53) & 0.411 (4.78)\\
   Wilcoxon II                 & 0.048 & 0.833 (3.56)                & 0.814 (3.37)  & 0.317 (4.83)\\
   Wilcoxon III                & 0.050 & 0.851 (3.54)                & 0.802 (3.38) & 0.558 (4.80)\\
   Wilcoxon IV                & 0.050 & 0.716 (3.84)                & 0.615 (3.76) & 0.812 (4.74)\\
   Logrank                     & 0.049 & 0.893 (3.31)                & 0.766 (3.45) & 0.776 (4.30)\\*[0.05in]
   RMST                         & 0.048 & 0.887 (3.32)                & 0.768 (3.45) & 0.783 (4.17)\\
   RMST I                       & 0.050 & 0.877 (3.33)                & 0.787 (3.41) &  0.662 (4.32)\\
   RMST II                       & 0.048 & 0.887 (3.31)                & 0.769 (3.44) &  0.781 (4.19)\\
   RMST III                     & 0.049 & 0.819 (3.60)                 & 0.611 (3.85) & 0.871 (3.97)\\
  \hline
 \end{tabular}
  \begin{tablenotes}
\item[$^{\rm a}$] Null: null hypothesis; Prop-haz: proportional hazards
alternative; Log-odds: log odds alternative; Non-prop-haz:
Non-proportional hazards alternative
\item[$^{\rm b}$] Tests are as defined in the text
\item[$^{\rm c}$] Monte Carlo average of number of interim analyses before stopping
in parentheses
  \end{tablenotes} 
 \end{table}
          
 Table~\ref{t:two} shows the Monte Carlo average of the standardized
 covariance matrix of the statistics used in constructing each test
 under the null hypothesis, obtained by dividing the empirical
 covariance matrix by the empirical variance of $X_{K,n}$.  As
 expected, the statistics associated with the logrank test and the
 Wilcoxon I, II, III, and IV tests all have empirical standardized
 covariance matrices consistent with independent increments; that
 associated with Gehan's Wilcoxon test does not.

\begin{table}
   \caption{Monte Carlo average empirical standardized covariance
     matrix for each test under the null hypothesis. \label{t:two}}
  \centering
  \begin{tabular}{l l l l l p{0.15in}  l l l l l p{0.15in} l l l l l}
    \hline
    \multicolumn{5}{c}{Wilcoxon (adjusted)}& &\multicolumn{5}{c}{Wilcoxon I} & &\multicolumn{5}{c}{Wilcoxon II}\\
    
    0.058 & 0.092 & 0.127 & 0.136 &  0.137  && 0.048 & 0.046 & 0.047 & 0.045 & 0.045 
                                        && 0.329 & 0.325 & 0.325 & 0.322 & 0.323\\
    0.092 & 0.240 & 0.334 & 0.367 & 0.371 && 0.046 & 0.240 & 0.243 & 0.241 &  0.243 &&
                                         0.325 & 0.565 & 0.568 & 0.565 & 0.567\\
    0.127& 0.334 & 0.651 & 0.725 & 0.735 && 0.047 & 0.243 & 0.685 & 0.685 & 0.687 &&
                                      0.325 & 0.568 & 0.825 & 0.823 & 0.825\\
    0.136 & 0.367 & 0.725 & 0.933 & 0.951 && 0.045 & 0.241 & 0.685 & 0.953 &0.954 &&
                                       0.322 & 0.565 & 0.823 & 0.963 & 0.965\\
    0.137 & 0.371& 0.735 & 0.951 & 1.000 && 0.045 & 0.243 & 0.687 & 0.954 &  1.000 &&
                                      0.323 & 0.567 & 0.825 & 0.965 & 1.000\\*[0.05in]
     \multicolumn{5}{c}{Wilcoxon III} && \multicolumn{5}{c}{Wilcoxon IV} &&\multicolumn{5}{c}{Logrank} \\
0.205 & 0.202 & 0.203 & 0.200 & 0.200 &&
0.000&  0.000 & 0.000 & 0.000 & 0.000  &&  0.216 & 0.215 & 0.216 & 0.214 & 0.213    \\
0.202 & 0.390 & 0.393 & 0.389 & 0.394 &&
0.000&  0.017 & 0.017 & 0.017 & 0.019 &&    0.215 & 0.427 & 0.428 & 0.425 & 0.427 \\
0.203 & 0.393 & 0.611 & 0.609 & 0.613 &&
0.000 & 0.017 & 0.055 & 0.055 & 0.057&&   0.216 & 0.428 & 0.679 & 0.679 & 0.680 \\
0.200 & 0.389 & 0.609 & 0.828 & 0.830 &&
0.000 & 0.017 & 0.055 & 0.233 &  0.232 &&   0.214 & 0.425 & 0.679 & 0.879 & 0.879 \\
0.200 & 0.394 & 0.613 & 0.830 & 1.000 &&
0.000 & 0.019 & 0.057 & 0.232  &1.000 &&   0.213 & 0.427 & 0.680 & 0.879 & 1.000 \\*[0.05in]

 \multicolumn{5}{c}{RMST}  &&   \multicolumn{5}{c}{RMST I}&& \multicolumn{5}{c}{RMST II}  \\

0.298 & 0.279 & 0.239 & 0.231  & 0.242 &&    0.379 & 0.373 & 0.370 & 0.370 & 0.369 && 0.211 & 0.207 & 0.204 & 0.204 &  0.203 \\
 0.279 & 0.560 & 0.500 & 0.467 & 0.479 &&   0.373 & 0.622 & 0.620 & 0.619 & 0.619 && 0.207 & 0.428 & 0.427 & 0.425 & 0.425 \\
0.239 & 0.500 & 0.739 & 0.691 & 0.692 && 0.370 & 0.620 & 0.837 & 0.835 & 0.834 &&  0.204 & 0.427 & 0.677 & 0.674 & 0.672   \\
0.231 & 0.467 & 0.691 & 0.872 & 0.864 && 0.370 & 0.619 & 0.835 & 0.961 & 0.959 &&  0.204& 0.425 & 0.674 & 0.879 & 0.874\\
 0.242 & 0.479 & 0.692 & 0.872 & 1.000 && 0.369 & 0.619 & 0.834 & 0.959 & 1.000 && 0.203 & 0.425 & 0.672 & 0.874 &  1.000     \\*[0.05in]
        \multicolumn{5}{c}{RMST III} \\
0.002 & 0.002 & 0.002 & 0.001 &  0.001 \\
0.002 & 0.135 & 0.135 & 0.133 & 0.134 \\
0.002 & 0.135 & 0.391 & 0.387 & 0.385   \\
0.001& 0.133 & 0.387 & 0.702 & 0.694\\
0.001 & 0.134 & 0.385 & 0.694 &  1.000    \\
    \hline
  \end{tabular}
\end{table}

\subsection{Monitoring using restricted mean survival time (RMST)}
 \label{ss:rmst}

 RMST is defined as $R(L) = E\{\min(T, L)\}$, for some specified time
 $L$, mean survival time restricted to time $L$, which can also be
 computed as $R(L) =\int_0^LS(u)\, du$, the area under the survival
 distribution $S(u)$ of $T$ through time $L$. In many studies, because
 of limited follow up, it is not possible to estimate $E(T)$ unless
 the support of $T$ is contained in the interval from 0 to the maximum
 follow up time.  Thus, of necessity, survival time must be restricted;
 e.g., at an interim analysis at time $t$, survival time can be
 observed only through $t$.  Given possibly censored observations on
 $T$, $R(L)$ can be estimated nonparametrically by
 $\int_0^L\widehat{S}(u) \,du$, where $\widehat{S}(u)$ is the
 Kaplan-Meier estimator of $S(u)$ based on these data.  Consequently,
 the null hypothesis of equality of treatment-specific survival
 distributions can be characterized in terms of the treatment effect
 parameter $\theta=R_1(L)-R_0(L)=\int_0^L\{S_1(u)-S_0(u)\} du$, the
 difference in treatment-specific RMST, as $H_0: \theta = 0$.  This
 parameter has intuitive appeal, as it can be viewed also as
 representing the expected years of life saved by using the better
 treatment over the restricted time interval $(0, L)$.  Moreover,
 $\theta$ can be estimated nonparametrically by
    $$\widehat{\theta}=\int_0^L\{\widehat{S}_1(u)-\widehat{S}_0(u)\}du,$$ 
where $\widehat{S}_z(u)$, $z = 0, 1$, are the treatment-specific
Kaplan-Meier estimators of $S_z(u)$, $z = 0, 1$.

Group sequential testing based on RMST has been discussed by several
authors \cite{Murray, LuTian}.  At interim analysis times
$t_1,\ldots, t_K$, one considers
$\underline{\vartheta} = (\theta_1,\ldots, \theta_K)^T$, where
\begin{equation}
  \theta_j=R_1(L_j) - R_0(L_j) = \int_0^{L_j}\{S_1(u)-S_0(u)\}du, \hspace{0.1in}j=1,\ldots,
  K, \hspace{0.1in} L_j \leq t_j.
  \label{eq:thetaj}
\end{equation}
Because of limited follow up at $t_j$, the restricted time $L_j$ must
be less than $t_j$, and $L_j$, $j=1,\ldots, K$, may be taken to
increase with increasing $t_j$ to reflect that additional data are
accrued as the study progresses.  The null hypothesis is then
$H_0:\theta_1 = \cdots =\theta_K=0$, and monitoring the study and
testing the null hypothesis entails estimating $\underline{\vartheta}$
by some estimator
$\underline{\widehat{\vartheta}} = (\widehat{\theta}_1, \ldots,
\widehat{\theta}_K)^T$ and rejecting $H_0$ at the first time the
estimated treatment effect parameter (or absolute value thereof),
suitably normalized and standardized, exceeds some critical value.  Letting
$\widehat{S}_z(u,t)$ denote the Kaplan-Meier estimator for $S_z(u)$,
$z= 0, 1$, based on all available data through time $t$, and defining
$\widehat{R}_z(t, L) = \int_0^{L} \widehat{S}_z(u,t) \,du$, $z= 0, 1$,
the obvious estimator for $\theta_j$ in (\ref{eq:thetaj}) is
\begin{equation}
  \widehat{\theta}_j= \widehat{R}_1(t_j, L_j) -
  \widehat{R}_0(t_j, L_j) = \int_0^{L_j}\{\widehat{S}_1(u,t_j)-\widehat{S}_0(u,t_j)\}du, 
\hspace{0.1in} j = 1,\ldots, K.
\label{eq:thetahatj}
\end{equation}
Taking $L_j$ not to vary with $t_j$, so that $L_j = L$ for all
$j = 1,\ldots, K$, implies that monitoring does not begin until after
$L$, i.e., $t_1 > L$, and that $\theta_j = \theta$, $j = 1,\ldots, K$.
As demonstrated by Murray and Tsiatis \cite{Murray}, under these
conditions, at each interim analysis, the common $\theta$ is estimated
using the efficient Kaplan-Meier estimators for $S_z(u)$, $z= 0, 1$,
and the sequentially-computed estimators
$\int_0^{L}\{\widehat{S}_1(u,t_j)-\widehat{S}_0(u,t_j)\}
du$, suitably normalized, have the independent increments property, so
that group sequential methods can be implemented readily with standard
software.  However, if $L_j$ and thus $\theta_j$ do vary over time as
above, then normalized versions of $\widehat{\theta}_j$ in
(\ref{eq:thetahatj}) do not have this property.  In this case, Murray
and Tsiatis \cite{Murray} and Lu and Tian \cite{LuTian} show that
group sequential tests can be constructed that preserve the desired
significance level by using an $\alpha$-spending function and
recursively evaluating multivariate normal integrals to construct
stopping boundaries based on the asymptotic joint distribution of
suitably normalized $\widehat{\theta}_j$ under the null
hypothesis.  As in Section~\ref{s:sludandwei}, to take advantage of
standard software for computing stopping boundaries, we now describe
how Theorem~\ref{thm1} can be used to derive modified statistics based
on linear combinations of sequentially-computed estimators that
have the independent increments property.

Let $n_z = \sum_{i=1}^n I(Z_i = z)$ denote the potential number of
individuals randomized to treatment $z$, $z = 0,1$, with event times $T_{z,i}$ and
entry times $E_{z,i}$, $i = 1,\ldots,n_z$.  Then $n = n_0+n_1$ is the
total potential sample size, and, $n_1/n \rightarrow \pi = P(Z=1)$ as $n
\rightarrow \infty$.  We proceed analogous to the previous section and derive the
asymptotic covariance matrix ${\mathcal{V}}_{\theta}$ of
$n^{1/2} \widehat{\underline{\vartheta}} = n^{1/2}(\widehat{\theta}_1, \ldots,
\widehat{\theta}_K)^T$.  Following previous authors
\cite{Murray,LuTian} and owing to the form $\widehat{\theta}_j$ in
(\ref{eq:thetahatj}) as the difference of treatment-specific terms, we
first find the influence functions for $\widehat{R}_z(t, L)$,
$z = 0, 1$.   The $i$th influence
function $IF_{R,z,i}(t,L) $ for $\widehat{R}_z(t, L)$ satisfies
$$n_z^{1/2} \{ \widehat{R}_z(t, L) - R_z(L)\} = n_z^{-1/2}
\sum_{i=1}^{n_z} IF_{R,z,i}(t,L) + o_P(1).$$
To derive $IF_{R,z,i}(t,L)$, note that the $i$th influence function
for the treatment-specific Kaplan-Meier estimator
$\widehat{S}_z(u, t)$ of $S_z(u)$, $IF_{z,i}(u,t)$, say, is defined
such that
$$n_z^{1/2}\{\widehat{S}_z(u,t)-S_z(u)\}=n_z^{-1/2}\sum_{i=1}^{n_z}
   IF_{z,i}(u,t)+o_P(1).$$
   Standard results for the Kaplan-Meier estimator using counting
   process methodology yield
   $$IF_{z,i}(u,t)=I(E_{z,i} \le t)S_1(u)\int_0^u
   \frac{dM_{T_{z,i}}(x)I(t-E_{z,i} \geq x)}{w_1(x,t)},$$
where $N_{T_z}(u)=I(T_z\le u)$, $\mathcal{Y}_{T_z}(u)=I(T_z\ge u)$,
$dM_{T_z}(u)=dN_{T_z}(u)-\lambda(u) \mathcal{Y}_{T_z}(u)du$,
and $w_1(x,t)=E\{I(T_{z,i}\geq x,t-E_{z,i} \geq x)\}$, and thus the $i$th
   influence function of $\widehat{R}_z(t, L)$ is given by 
  $$IF_{R,z,i}(t,L)= I(E_{z,i} \le t)\int_0^LS_1(u)\int_0^u
  \frac{dM_{T_{z,i}}(x)I(t-E_{z,i} \geq x)}{w_z(x,t)} du = \int_0^L
\frac{A_z(u,L)}{w_z(u,t)}dM_{T_{z,i}}(u) I(t-E_{z,i}\geq u),$$
where  $A_z(u,L)=\int_u^L S_z(x)\, dx$, and the second equality
follows by a change of
   variables and absorbing $I(E_{z,i} \le t)$ into $I(t-E_{z,i} \geq u)$.
   Accordingly, at $t_j$,
   \begin{equation}
n_z^{1/2} \{ \widehat{R}_z(t_j, L_j) - R_z(L_j)\}  \inD N[\, 0,
\var\{ IF_{R,z}(t_j,L_j)\}\, ],
\label{eq:rmstdiff}
 \end{equation}
   where, by standard martingale results,
   \begin{equation}
     \var\{ IF_{R,z}(t_j,L_j)\}  = E\left\{\int_0^{L_j}
       \frac{A^2_z(u,L_j)}{w^2_z(u,t_j)}I(T_{z,i}\ge u,t_j-E_{z,i}\ge
       u)\lambda_z(u)du\right\} = \left\{\int_0^{L_j}
       \frac{A^2_z(u,L_j)}{w_z(u,t_j)}\lambda_z(u)du\right\}.
     \label{eq:varj}
     \end{equation}
 A  consistent estimator for $\var\{ IF_{R,z}(t_j,L_j)\}$ in (\ref
 {eq:varj}) is given by, analogous to Nemes et al. \cite{RMSTreview},
\begin{equation}\label{eqn15}
     n_z\int_0^{L_j}
       \frac{\widehat{A}^2_z(u,t_j,L_j)}{W_{z,n}(u,t_j)\{W_{z,n}(u,t_j)-1\}}
       \, dN_z(u,t_j),
     \end{equation}
     where $\widehat{A}_z(u,t,L)=\int_u^L \widehat{S}_z(u,t)du$,
     $W_{z,n}(u,t)=\sum_{i=1}^{n_1} I(T_{z,i}\ge u, t-E_{z,i} \ge u)$,
     and $dN_z(u,t)=\sum_{i=1}^{n_1} I(T_{z,i} \leq u, t-E_{z,i} \geq  u)$.
     Similarly,
     $[\, n^{1/2}_z \{ \widehat{R}_z(t_j, L_j) - R_z(L_j)\},
     n^{1/2}_z\{ \widehat{R}_z(t_k, L_k) - R_z(L_k)\}\,]^T$,
     $j \leq k$ converges in distribution to a bivariate normal random
     vector with mean zero and covariance matrix with diagonal
     elements $\var\{ IF_{R,z}(t_j,L_j)\}$ and
     $\var\{ IF_{R,z}(t_k,L_k)\}$ and off-diagonal elements
     $\cov\{ IF_{R,z}(t_j,L_j), IF_{R,z}(t_k,L_k)\}$.  Using standard
     counting process methods,
\begin{align}
  \cov\{IF_{R,z,i}(t_j,L_j),IF_{R,z,i}(t_k,L_k)\}
  &=
      E\left\{\int_0^{L_j} \frac{A_z(u,L_j)
      A_z(u,L_k)}{w_z(u,t_j)
      w_z(u,t_k)}I(T_{z,i}\ge u,t_j-E_{1,i}\ge
    u)\lambda_z(u)\,du\right\} \nonumber \\
  &=\int_0^{L_j} \frac{A_z(u,L_j)
      A_z(u,L_k)}{w_z(u,t_k)}\lambda_z(u)\,du. \label{eq:covjk}
\end{align}
Inspection of (\ref{eq:varj}) and (\ref{eq:covjk}) shows that 
 the independent increments property (for estimators) does not hold 
 unless $L_j=L_k$.   A consistent estimator for (\ref{eq:covjk}) is given by
\begin{equation}\label{eqn16}
n_z\int_0^{L_j} \frac{\widehat{A}_z(u,t_k,L_j)
  \widehat{A}_z(u,t_k,L_k)}{W_{z,n}(u,t_k)\{W_{z,n}(u,t_k)-1\}}dN_z(u,t_k).
\end{equation}
Note that in (\ref{eqn16}) we estimate both $A_z(u,L_j)$ and
$A_z(u,L_k)$ using all the data through the larger time $t_k$.

From these results, the asymptotic distribution of $n^{1/2}
\underline{\widehat{\vartheta}}$ under $H_0$ and thus the form of
$\mathcal{V}_\theta$ can be deduced; it suffices to derive the
bivariate distribution of $n^{1/2}(\widehat{\theta}_j,
\widehat{\theta}_k)^T$.  Note that, under $H_0$, $R_1(L_j) =
R_0(L_j)$, so that 
\begin{align*}
  n^{1/2}\widehat{\theta}_j &= n^{1/2} \{  \widehat{R}_1(t_j, L_j)- \widehat{R}_0(t_j, L_j) \} \\
 &=  (n/n_1)^{1/2} n^{1/2}_1 \{\widehat{R}_1(t_j, L_j) - R_1(L_j)\} - (n/n_0)^{1/2} n^{1/2}_0 \{
   \widehat{R}_0(t_j, L_j) - R_0(L_j)\} \\
&\inD N\left[0, \pi^{-1}\var\{ IF_{R,1}(t_j,L_j)\}+(1-\pi)^{-1} \var\{ IF_{R,0}(t_j,L_j)\}\right],
\end{align*}
using (\ref{eq:rmstdiff}), Slutsky's theorem, and the fact that the
treatment-specific RMST estimators are from independent samples, and
similarly for $n^{1/2}\widehat{\theta}_k$.  Moreover, using
(\ref{eq:covjk}),
\begin{align*}
\cov( n^{1/2} \widehat{\theta}_j, n^{1/2}\widehat{\theta}_k) = 
\pi^{-1} \int_0^{L_j} \frac{A_z(u,L_j)
      A_1(u,L_k)}{w_1(u,t_k)}\lambda_1(u)\,du + (1-\pi)^{-1} \int_0^{L_j} \frac{A_0(u,L_j)
      A_0(u,L_k)}{w_0(u,t_k)}\lambda_0(u)\,du.
  \end{align*}
Thus, $n^{1/2}(\widehat{\theta}_j,\widehat{\theta}_k)^T$ converges in
distribution to a normal random vector with mean zero and covariance
matrix that follows from these results.  An estimator $\widehat{\mathcal{V}}_\theta$ for
$\mathcal{V}_\theta$ follows by substitution of (\ref{eqn15}) and 
 (\ref{eqn16}) in the foregoing expressions along with the estimator
 $\widehat{\pi} = n_1/n$.  With these substitutions, the standardized
 test statistic $n^{1/2}
 \widehat{\theta}_j\big/\{\widehat{\mathcal{V}}_\theta(j,j)\}^{1/2}$ does not depend
 $n$, $n_1$, or $n_0$ and thus depends only on the data accrued
 through $t_j$.

 We are now in a position to define modified test statistics obtained
 by choosing linear combinations of the normalized,
 sequentially-computed estimators that result in statistics with the
 independent increments property. We thus must choose the vector of
 constants $b = (b_1,\ldots,b_K)^T$ accordingly.  Let
 $X_{j,n} = n^{1/2} \widehat{\theta}_j$ denote the original,
 sequentially-computed statistics.  We demonstrate how to choose $b$ to
 target the same specific alternatives as  in Section~\ref{s:sludandwei}.  
For a given such alternative $S_1(t, \delta)$ as in (\ref{eqn14.5})
and (\ref{eqn14.56}), under local alternatives $\delta_n$,
such that $n^{1/2}\delta_n\rightarrow \tau$, $X_{j,n}$ converges in
distribution to a normal random variable with variance
$\mathcal{V}_\theta(j,j)$ and mean given by the limit of 
$n^{1/2}\int_0^{L_j}\{ S_1(u,\delta_n)-S_0(u)\}du$, which equals
$$\mu(L_j) = \tau\int_0^{L_j} \dot{S}_1(u), \hspace{0.15in} \dot{S}_1(u) =
\frac{dS_1(u,\delta)}{d\delta} \bigg|_{\delta=0}.$$
For the log-odds alternative (\ref{eqn14.5}), $\dot{S}_1(u) = S_0(u)\{1-S_0(u)\}$,
and the asymptotic mean can be estimated by
\begin{equation}\label{eqn17}
\widehat{\mu}(L_j) =   \int_0^{L_j} \widehat{S}(u,t_j)\{1-\widehat{S}(u,t_j)\}du,
  \end{equation}
  where as before $\widehat{S}(u, t)$ is the Kaplan-Meier estimator of the survival
  distribution under $H_0$ using the data through time $t$ combined
  over both treatments.   For the non-proportional hazards alternative (\ref{eqn14.56}), 
  $S_1(u,\delta)=\exp\{-\Lambda_1(u,\delta)\}$, where
  $\Lambda_1(u,\delta)$ is the cumulative hazard function
  $$\Lambda_1(u,\delta)=\int_0^u\lambda_1(x,\delta)\,dx=\Lambda_0(u)I(u<\mathcal{T}_{delay})+\{\Lambda_0(u)-\Lambda_0(\mathcal{T}_{delay})\}\exp(\delta)
  I(u\geq \mathcal{T}_{delay}),$$ and thus
  $\dot{S}_1(u) =
  -S_0(u)\{\Lambda_0(u)-\Lambda_0(\mathcal{T}_{delay})\}I(u\ge
  \mathcal{T}_{delay})$, and the asymptotic mean can be estimated by
\begin{equation}\label{eqn18}
\widehat{\mu}(L_j)   = \int_{\mathcal{T}_{delay}}^{L_j}\widehat{S}(u,t_j)\{\widehat{\Lambda}(u,t_j)-\widehat{\Lambda}(\mathcal{T}_{delay},t_j)\},
\end{equation}
where $\widehat{\Lambda}(u, t)$ is the Nelson-Aalen estimator for the
cumulative hazard function under $H_0$ using the data through time
$t$ combined over both treatments.  Proportional hazards
alternatives can be targeted by taking $\mathcal{T}_{delay} = 0$.  As
before, for the alternative of interest, the proposed test statistic
at the $j$th interim analysis is
\begin{equation}
Y_{j,n} \big/
(\underline{\widehat{\mu}}_j^T\widehat{{\mathcal{V}}}_{\theta,j}^{-}
\underline{\widehat{\mu}}_j)^{1/2}, \hspace{0.15in}
Y_{j,n}=\underline{\widehat{\mu}}_j^T\widehat{{\mathcal{V}}}_{\theta,j}^{-}
\underline{X}_{j,n},
\label{eq:teststat2}
\end{equation}
where
$\underline{\widehat{\mu}}_j=\{\widehat{\mu}(L_1),\ldots,
\widehat{\mu}(L_j),0,\ldots, 0\}^T$, and
$\widehat{\mathcal{V}}_{\theta,j}$ is the $(K\times K)$ matrix where the
upper left hand $(j\times j)$ submatrix is that of
$\widehat{\mathcal{V}}_\theta$ and the remainder of the matrix zeros.
The $Y_{j,n}$, $j=1,\ldots,K$, have the independent increments
property, and the test should be more powerful against the targeted
alternative than that based on $X_{1,n},\ldots,X_{j,n}$.

We study the performance of group sequential test procedures based on
RMST and compare to that of the procedures based on the Gehan's
Wilcoxon and logrank tests in simulations studies involving 10,000
Monte Carlo trials with generative scenarios identical to those in
Section~\ref{s:sludandwei}.  For the procedures based on RMST, which
require $L_j \leq t_j$, $j=1,\ldots,K$, at each interim analysis,
because of the potential instability of $\widehat{\theta}_j$ at the tail
of the distribution, we considered choosing $L_j = t_j, t_j-0.1$, and
$t_j-0.2$.  The latter two choices resulted in empirical type I error
closest to the nominal 0.05 level of significance with no discernible
loss of power against any of the alternatives considered.  We thus
report results with $L_j = t_j-0.2$, $j=1,\ldots,K$.

We consider several group sequential test procedures: that based on
the RMST test statistic
$n^{1/2}
\widehat{\theta}_j\big/\{\widehat{\mathcal{V}}_\theta(j,j)\}^{1/2}$,
$j=1,\ldots, K$; the modified RMST test as in (\ref{eq:teststat2})
constructed using (\ref{eqn17}) to favor log odds alternatives, which
we refer to as RMST I; the modified RMST test as in
(\ref{eq:teststat2}) constructed using (\ref{eqn18}) with
$\mathcal{T}_{delay}=0$ to favor proportional hazards alternatives,
RMST II; and the modified RMST test as in (\ref{eq:teststat2})
constructed using (\ref{eqn18}) with $\mathcal{T}_{delay}=0.6$ to
favor non-proportional hazards alternatives, RMST III.  Because the
statistic for the usual RMST test does not have the independent
increments property, boundaries are obtained using an
$\alpha$-spending function and multivariate normal integration
accomplished using the \texttt{mvtnorm} package in R \cite{mvtnorm},
analogous to the construction of boundaries for the adjusted Wilcoxon
test in Section~\ref{s:sludandwei}.  For the modified tests RMST
I--III, we used the \texttt{ldbounds} package in R \cite{ldbounds1} to
compute the boundaries.  The monitoring times, alpha spending
function, and parameters for the alternatives are identical to those
in the previous section, as are the 10,000 data sets in each case.

Empirical type I error and power under the alternatives are shown
in Table~\ref{t:one}.  All tests based on RMST achieve the nominal
level of 0.05, and, as expected, RMST I has the greatest power to
detect the log-odds alternative among them and has power only somewhat
less than that for Wilcoxon II, which also targets this alternative.
RMST III has substantially greater power for detecting the
non-proportional hazards alternative than all other tests.
Interestingly, RMST I has power approaching that of the logrank test
for the proportional hazards alternative.  Although our focus here is
not on the relative performance of these test procedures, the results
suggest that basing monitoring of treatment differences on RMST
statistics as an omnibus approach may have good properties under a
range of alternatives of interest.  The Monte Carlo averages of the
standardized empirical covariance matrices for the statistics involved
in each RMST method under the null hypothesis are shown in
Table~\ref{t:two}.  As expected, that for the unmodified  RMST statistic
is not consistent with the independent increments property while those
for RMST I-III are.  The average number of analyses conducted shows a
similar pattern as that for the Wilcoxon tests.

\section{Discussion}\label{s:discuss}

There is a vast literature on monitoring treatment differences in
randomized clinical trials using group sequential tests.  Much of the
methodology is based on the premise that the associated statistics
have the independent increments property, which allows standard
algorithms and software to be used to compute stopping boundaries.
However, this property may not hold for some tests of interest in
practice.  We have demonstrated that, regardless of whether or not the
covariance matrix of the relevant statistics has the independent
increments structure, it is always possible to find linear
combinations of them that do have this structure.  Moreover, we show
how the linear combinations can be chosen judiciously to result in
tests with high power to detect specific alternatives of interest.
Thus, modified test statistics that can be derived from our
results will have improved power, and monitoring based on them can be
implemented readily using existing group sequential software.  

The approach we have presented requires the analyst to estimate the
covariance matrix of the sequentially-computed statistics on which
the test procedure is based.  Thus, care must be taken to retain or
have the means to recreate the data collected at previous interim analyses.


\bmsection*{Acknowledgments}

This research is supported by the National Cancer Institute of the
National Institutes of Health through grant R01CA280970.  

\bmsection*{Data availability statement}

No data are used in this article.  R code implementing the simulation
studies can be found online in the Supporting Information section at
the end of this article.

\bmsection*{Conflict of interest}

The authors declare no potential conflict of interests.

\bibliography{groupsequential.bib}

\appendix

\bmsection{Asymptotic mean of Gehan's Wilcoxon statistic under log-odds
  alternative}
\label{app:a}

We give a heuristic argument to show that, under local log-odds
alternatives, the asymptotic mean of the statistic used to construct Gehan's Wilcoxon is given by
(\ref{eqn14.6}). Under the log-odds alternative
(\ref{eqn14.5}), the corresponding hazard functions can be found by taking minus the
derivative of the log survival functions, which yields
$$\lambda_1(u,\delta)=\lambda_0(u)+\frac{\{\exp(\delta)-1\}\lambda_0(u)S_0(u)}{1+S_0(u)\{\exp(\delta)-1\}}.$$
Under local alternatives $\delta_n$ such that $n^{1/2}\delta_n\rightarrow \tau$,
\begin{equation}\label{ss.1}
    \lambda_1(u,\delta_n)=\lambda_0(u)\{1-\delta_nS_0(u)\}+o(\delta_n),
  \end{equation}
  where $o(\delta_n)$ is a term that is of small order in $\delta_n$;
 i.e., $o(\delta_n)/\delta_n\rightarrow 0$. By
  (\ref{eqn8.5}), the sequentially-computed statistic
  $G_n(t)$ is asymptotically equivalent to
  \begin{equation}\label{ss.2}
    n^{-1/2}\sum_{i=1}^n\int_0^tw(u,t)(Z_i-\pi)dM_{T_i}(u)I(t-E_i \geq  u)
  \end{equation}
  under the null hypothesis. Because of contiguity, the asymptotic
  equivalence is also true under the local alternatives. However, for
  local alternatives, the martingale increment process is given by
  $$dM_{T_i}(u,\delta_n)=dN_{T_i}(u)-\{\lambda_1(u,\delta_n)Z_i+\lambda_0(u)(1-Z_i)\}I(T_i\ge
  u)du,$$ which has mean zero under the sequence of local alternatives.
  We write (\ref{ss.2}) as
  \begin{eqnarray}
    &&n^{-1/2}\sum_{i=1}^n\int_0^tw(u,t)(Z_i-\pi)dM_{T_i}(u,\delta_n)I(t-E_i \geq   u) \label{ss.3}\\
    &+&n^{-1/2}\sum_{i=1}^n\int_0^tw(u,t)(Z_i-\pi)\{dM_{T_i}(u)-dM_{T_i}(u,\delta_n\}I(t-E_i \geq   u). \label{ss.4}
  \end{eqnarray}
  Under local alternatives, (\ref{ss.3}) converges to the same
  distribution as (\ref{ss.2}) does under the null hypothesis; namely
  $N\left[0,\var\{IF(t)\}\right]$.  Now
  $$dM_{T_i}(u)-dM_{T_i}(u,\delta_n)=Z_i\{\lambda_1(u,\delta_n)-\lambda_0(u)\}I(T_i\ge
  u),$$ which, using (\ref{ss.1}), equals
  $-Z_i\delta_n\lambda_0(u)S_0(u)I(T_i\ge u)+o(\delta_n)$.
  Therefore, (\ref{ss.4}) is equal to
  \begin{align*}
-n^{-1/2}\sum_{i=1}^n&\int_0^tw(u,t)(Z_i-\pi)\{Z_i\delta_n\lambda_0(u)S_0(u)I(T_i\ge
  u)+o(\delta_n) \}I(t-E_i \geq   u)du \\
 &= -n^{-1}\sum_{i=1}^n\int_0^tw(u,t)(Z_i-\pi)\{Z_in^{1/2}\delta_n\lambda_0(u)S_0(u)I(T_i\ge
    u)+n^{1/2}o(\delta_n) \}I(t-E_i \geq   u)du.
\end{align*}
Because  $n^{1/2}\delta_n\rightarrow \tau$ and $n^{1/2}o(\delta_n)\rightarrow
  0$, (\ref{ss.4}) is equal to
  \begin{equation}\label{ss.5}
  -n^{-1}\sum_{i=1}^n\int_0^tw(u,t)(Z_i-\pi)Z_i\tau\lambda_0(u)S_0(u)I(T_i\ge
  u)I(t-E_i \geq   u)du+o_P(1).
\end{equation}
Again, by contiguity, (\ref{ss.5}) converges under the local
alternatives to the same limit as it would under the null hypothesis,
which, by a simple application of the law of large numbers, is equal to
$$-E_{H_0}\left\{\int_0^tw(u,t)(Z_i-\pi)Z_i\tau\lambda_0(u)S_0(u)I(T_i\ge
u,t-E_i \geq   u)du\right\}=
-\tau\pi(1-\pi)\int_0^tw^2(u,t)\lambda_0(u)S_0(u)du,$$ which is the same as
$\mu(t)$ in (\ref{eqn14.6}).



\bmsection{Asymptotic mean of Gehan's Wilcoxon statistic under delayed
  proportional hazards alternative}
\label{app:b}

Following a similar argument as that in Appendix~\ref{app:a}, under
the delayed proportional hazards alternative (\ref{eqn14.56}) and
under 
local alternatives $\delta_n$ such that $n^{1/2}\delta_n\rightarrow \tau$,
$$dM_{T_i}(u)-dM_{T_i}(u,\delta_n)=-Z_i\delta_n\lambda_0(u)I(T_i\ge
u)+o(\delta_n).$$
Thus, (\ref{ss.4}) is equal to
\begin{align}
-n^{-1/2}\sum_{i=1}^n &\int_0^tw(u,t)(Z_i-\pi)\{Z_i\delta_n\lambda_0(u)I(u\ge
\mathcal{T}_{delay})I(T_i\ge
  u)+o(\delta_n) \}I(t-E_i \geq   u) du \nonumber \\
&= -n^{-1}\sum_{i=1}^n\int_{\mathcal{T}_{delay}}^tw(u,t)(Z_i-\pi)\{Z_in^{1/2}\delta_n\lambda_0(u)I(T_i\ge
                                                u)+n^{1/2}o(\delta_n)
                                                       \}I(t-E_i \geq   u)du \nonumber \\
&=   -n^{-1}\sum_{i=1}^n\int_{\mathcal{T}_{delay}}^tw(u,t)(Z_i-\pi)Z_i\tau\lambda_0(u)I(T_i\ge
                                                                                                    u)I(t-E_i \geq   u)du+o_P(1)
    \label{ss.6}                                                                                                
  \end{align}
  because $n^{1/2}\delta_n\rightarrow \tau$ and
  $n^{1/2}o(\delta_n)\rightarrow 0$.  Again, by contiguity,
  (\ref{ss.6}) converges under the local alternatives to the same
  limit as it would under the null hypothesis, which by the law of
  large numbers is equal to
  \begin{align*}
-E \left\{\int_{\mathcal{T}_{delay}}^tw(u,t)(Z_i-\pi)Z_i\tau\lambda_0(u)I(T_i\ge
u,t-E_i \geq   u)du\right\} = -\tau\pi(1-\pi)\int_{\mathcal{T}_{delay}}^tw^2(u,t)\lambda_0(u)du,
\end{align*}
the same as (\ref{eqn14.65}).

\end{document}